\documentclass[reqno,12pt,centertags]{amsart}
\usepackage{amsmath,amsthm,amscd,amssymb,latexsym,upref,stmaryrd,epsfig,graphicx}
\usepackage[cp1251]{inputenc}
\usepackage[english]{babel}
\usepackage{mathtext}
\usepackage{amsfonts}
\usepackage{graphicx}
\usepackage{epsfig}
% Here you can turn off all labels
%\usepackage{showkeys}
%\usepackage[nomsgs,ignoreunlbld]{refcheck}
\usepackage{hyperref}
\usepackage[numbers,sort&compress]{natbib}
\newcommand*{\mailto}[1]{\href{mailto:#1}{\nolinkurl{#1}}}
\setlength{\baselineskip}{28pt}
\textheight 20 true cm
\textwidth 14 true cm

\usepackage{amsmath}
\usepackage{amsthm}

\newtheorem{theorem}{Theorem}
\newtheorem{lemma}{Lemma}
\newtheorem{remark}{Remark}

\numberwithin{equation}{section}

\begin{document}

\thispagestyle{empty}

\noindent{\large\bf Inverse resonance problems for the Schr\"{o}dinger operator on the real line with mixed given data}
\\

\noindent {\bf  Xiao-Chuan Xu}\footnote{Department of Applied
Mathematics, School of Science, Nanjing University of Science and Technology, Nanjing, 210094, Jiangsu,
People's Republic of China, {\it Email:
xiaochuanxu@126.com}}
{\bf and Chuan-Fu Yang}\footnote{Department of Applied
Mathematics, School of Science, Nanjing University of Science and Technology, Nanjing, 210094, Jiangsu,
People's Republic of China, {\it Email: chuanfuyang@njust.edu.cn}}
\\

\noindent{\bf Abstract.}
{In this work, we study inverse resonance problems for the Schr\"{o}dinger operator on the real line with the potential supported in $[0,1]$. In general, all eigenvalues and resonances can not uniquely determine the potential.  (i) It is shown that if the potential is known \emph{a priori} on $[0,1/2]$, then the unique recovery of the potential on the whole interval from all eigenvalues and resonances is valid. (ii) If the potential is known \emph{a priori} on $[0,a]$, then for the case $a>1/2$, infinitely many eigenvalues and resonances can be missing for the unique determination of the potential, and for the case $a<1/2$, all eigenvalues and resonances plus a part of so-called sign-set can uniquely determine the potential. (iii) It is also shown that all eigenvalues and resonances, together with  a set of logarithmic derivative values of eigenfunctions and wave-functions at $1/2$, can uniquely determine the potential.}

\medskip
\noindent {\it Keywords:}{
Schr\"{o}dinger operator; Inverse resonance problem; Scattering theory; Mixed data; Interior data}

\medskip
\noindent{\it 2010 Mathematics Subject Classification:} 34A55, 34L25, 47E05

\section{Introduction and results}
We consider the Schr\"{o}dinger operator $L(q)y:=-y''+q(x)y$ acting on the Hilbert space $L^2(\mathbb{R})$, where the real-valued potential $q$ bleongs to the class $Q^1$, namely, $q\in L^1(\mathbb{R})$ with supp$ q\subset[0,1]$ and does not vanish almost everywhere in a left neighborhood of $1$ and right neighborhood of zero.

Let $\psi_\pm(x,k)$ be the Jost solutions of the equation $-y''+q(x)y=k^2y$, which satisfy $\psi_+(x,k)=e^{{\rm i}kx}$ for $x\ge1$ and $\psi_-(x,k)=e^{-{\rm i}kx}$ for $x\le0$. Denote $\{ f,g\} :=fg'-f'g$. It is easy to prove that $\{\psi_-(x,\pm k),\psi_+(x,k)\}$ do not depend on $x$. Denote
\begin{equation}\label{1}
  \omega(k):= \{\psi_-(x,k),\psi_+(x,k)\},\quad s(k):= -\{\psi_-(x,-k),\psi_+(x,k)\}.
\end{equation}
The functions $\omega(k)$ and $s(k)$ are related to the transmission coefficient $T(k)$ and the reflection coefficients $R_\pm(k)$: $\omega(k)=2ik/T(k)$ and $s(\pm k)=2ikR_\pm (k)/T(k)$.
It is known \cite{EK} that $\omega(k)$ is an entire function of $k$, which has infinitely many zeros in $\mathbb{C}$, denoted by $\{k_j\}_{j\ge1}$ with $|k_{j+1}|\ge|k_{j}|$, among which there are finitely many ones in the upper half-plane $\mathbb{C}_+$ (all of them lie on the imaginary axis) and infinitely many ones in $\mathbb{C}_-$, and there is no one on $\mathbb{R}\setminus\{0\}$. If ${\rm Im}k_{j}>0$ (${\rm Im}k_{j}\le0$), then $k_{j}^2$ is an eigenvalue (resonance) of the operator $L(q)$ with the eigenfunction (wave-function) $\psi_+(x,k_{j})$.

The function $s(k)$ is also an entire function of $k$. Let $\{\zeta_j\}_{j\ge1}$ be all zeros of $s(k)$, and denote $\sigma_j=$sign($ {\rm Im}\zeta_j$) and $\sigma_0=$sign($i^us^{(u)}(0)/u!$), where $u$ is the multiplicity of $s(k)$ at $k=0$.

Inverse resonance problem for the Schr\"{o}dinger operator consists in determining the potential $q$ from the eigenvalues and resonances and/or other observable data, which is an important part of inverse scattering theory \cite{FAD,FY,VM}. Let's mention that inverse resonance problem for the Schr\"{o}dinger operator on the half line has been studied (see, for example, \cite{EK0,EK1,MSW,GR1,RS,XY} and the references therein). In the half line case, the unique recovery of the potential from the eigenvalues and resonances is valid \cite{EK0,GR1}. Moreover, if the potential is known a priori on a subinterval then infinitely many resonances and eigenvalues can be missing for the unique determination of the potential on the whole interval (see \cite{XY}). However, in the full line case, the inverse resonance problem remains open for a long time. It is known \cite{EK,MZ} that the potential can not be determined by the eigenvalues and resonances. Specifically, Zworski \cite{MZ} proved the uniqueness theorem for the symmetric potentials (i.e., $q(x)=q(1-x)$) when $\omega(0)\ne0$, and non-uniqueness  when $\omega(0)=0$. In 2005, Korotyaev \cite{EK} used the complex analysis method to prove that all eigenvalues and resonances and the set of signs $\{\sigma_j\}_{j\ge0}$ can uniquely determine the potential. When the potential is symmetric, then it is enough to specify all eigenvalues and resonances and  only a sign $\sigma_0$. Moveover, if $k=0$ is not a resonance (i.e., $\omega(0)\ne0$), then $\sigma_0$ is not necessary. In 2012, Bledsoe \cite{BM} studied the stability of the inverse resonance problem, and it was shown that that all compactly supported potentials, which have reflection coefficients whose zeros and poles (i.e., zeros of $s(k)$ and $\omega(k)$) are close enough in
some disc centered at the origin, are close. We also mention that the asymptotic behaviour of the resonances for the Schr\"{o}dinger operator $L(q)$ has been given in \cite{ST}.

In this paper, we shall give a further discussion for the inverse resonance problems for the Schr\"{o}dinger operator on the real line, and provide the following main results.

The condition (C): $q\in Q^1\cap C^m[0,\delta)\cap C^n(1-\delta,1]$ for some $\delta\in(0,1)$ and $m,n\in\mathbb{N}$ with $q^{(u)}(0)=0=q^{(v)}(1)$ for $u=\overline{0,m-1}$ and $v=\overline{0,n-1}$, and $q^{(m)}(0)q^{(n)}(1)\ne0$.

\begin{theorem}
 Let the potential $q$ satisfy the condition (C). If $q(x)$ is known a priori a.e.\!\!\! on $[0,1/2]$, then the set $\{k_j\}_{j\ge1}$ (namely, all the eigenvalues and resonances) uniquely determines $q(x)$ a.e. on $[0,1]$.
\end{theorem}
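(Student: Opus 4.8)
The plan is to first recover $\omega$ from its zeros, then turn the agreement on $[0,1/2]$ into a factorization of the Jost solutions, and finally force a Paley--Wiener vanishing. For the first step: $\omega(k)$ is entire of finite exponential type and, since $q\in L^{1}$, satisfies $\omega(k)=2{\rm i}k+O(1)$ as $k\to\pm\infty$ along $\mathbb{R}$, so $\int_{\mathbb{R}}(1+k^{2})^{-1}\log^{+}|\omega(k)|\,dk<\infty$ and $\omega$ lies in the Cartwright class. If $\tilde q$ is a second potential meeting the hypotheses, equal to $q$ a.e.\ on $[0,1/2]$, and producing the same zero multiset $\{k_{j}\}_{j\ge1}$, then $\omega/\tilde\omega=Ce^{{\rm i}ak}$ for some $C\in\mathbb{C}\setminus\{0\}$ and $a\in\mathbb{R}$; the asymptotics $\omega(k)/(2{\rm i}k)\to1$ and $\tilde\omega(k)/(2{\rm i}k)\to1$ as $k\to\pm\infty$ then force $a=0$, $C=1$, i.e.\ $\omega\equiv\tilde\omega$.

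Next, since $q=\tilde q$ a.e.\ on $[0,1/2]$ we have $\psi_-(x,k)=\tilde\psi_-(x,k)$ for $x\le1/2$, so evaluating $\omega=\{\psi_-,\psi_+\}$ and $\tilde\omega=\{\psi_-,\tilde\psi_+\}$ at $x=1/2$ and subtracting (using $\omega\equiv\tilde\omega$) gives $\{\psi_+-\tilde\psi_+,\psi_-\}(1/2)=0$. Since $\psi_+-\tilde\psi_+$ and $\psi_-$ both solve $-y''+qy=k^{2}y$ on $[0,1/2]$ and $\psi_-(\cdot,k)\not\equiv0$, their Wronskian vanishes identically on $[0,1/2]$ and they are proportional there; evaluating at $x=0$ gives $\psi_+(x,k)-\tilde\psi_+(x,k)=F(k)\psi_-(x,k)$ on $[0,1/2]$ with $F(k):=\psi_+(0,k)-\tilde\psi_+(0,k)$, which is entire. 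Using $\omega(k)+s(k)=2{\rm i}k\,\psi_+(0,k)$ and $\tilde\omega\equiv\omega$, this also reads $F(k)=(s(k)-\tilde s(k))/(2{\rm i}k)$, so the theorem reduces to proving $F\equiv0$.

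The main work is $F\equiv0$, which I intend to get by a Paley--Wiener argument on the imaginary axis. With $P(k):=\psi_+(1/2,k)-\tilde\psi_+(1/2,k)$, the last display is $F=P/\psi_-(1/2,\cdot)$ as an identity of entire functions. From the transformation-operator representations of $\psi_{\pm}(1/2,\cdot)$ one has $\psi_+(1/2,k)=e^{{\rm i}k/2}m_+(1/2,k)$ with $m_+(1/2,k)\to1$ in $\overline{\mathbb{C}_+}$ and $|\psi_+(1/2,k)|\le Ce^{3|{\rm Im}\,k|/2}$, hence $|P({\rm i}y)|=o(e^{-y/2})$ and $|P(-{\rm i}y)|\le Ce^{3y/2}$; likewise $\psi_-(1/2,{\rm i}y)=e^{y/2}(1+o(1))$, and --- crucially using that $q\in Q^{1}$ does not vanish near $0$, so that the transformation kernel of $\psi_-(1/2,\cdot)$ reaches its extreme point and $\psi_-(1/2,\cdot)$ has exponential type exactly $1/2$ while being zero-free in a sector about the negative imaginary axis --- $\log|\psi_-(1/2,-{\rm i}y)|=\tfrac12y+o(y)$. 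Thus $|F({\rm i}y)|=o(e^{-y})$ and $|F(-{\rm i}y)|\le e^{y+o(y)}$. Writing $F(k)=\int_{0}^{2}B_{0}(t)e^{{\rm i}kt}\,dt$ with $B_{0}\in L^{1}[0,2]$ the difference of the two Jost kernels at $x=0$, the first bound forces the essential support of $B_{0}$ into $[1,\infty)$ and the second forces it into $(-\infty,1]$, so $B_{0}=0$ a.e.\ and $F\equiv0$. (Running the same computation with $1/2$ replaced by $a$ gives the essential support of $B_{0}$ inside $[2a,2-2a]$, a nontrivial interval precisely when $a<1/2$; this is the threshold appearing in the theorem.)

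Finally, $F\equiv0$ together with $\omega\equiv\tilde\omega$ yields $s\equiv\tilde s$, so $q$ and $\tilde q$ share the transmission and reflection coefficients and the bound states with their norming constants (read off from $s/\omega$), and the uniqueness theorem for inverse scattering on the line gives $q=\tilde q$ a.e.; equivalently $\psi_+(0,\cdot)=\tilde\psi_+(0,\cdot)$ and $\psi_+'(0,\cdot)=\tilde\psi_+'(0,\cdot)$, so the Weyl function $m(k)=\psi_+'(0,k)/\psi_+(0,k)$ of $-y''+qy$ on $[0,1]$ with the radiation condition at $x=1$ agrees for the two potentials and a Borg--Marchenko argument concludes. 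The step I expect to be the genuine obstacle is the Paley--Wiener estimate above --- especially the sharp lower bound $\log|\psi_-(1/2,-{\rm i}y)|=\tfrac12y+o(y)$, where the hypothesis $q\in Q^{1}$ is used, and the resulting localization of the Fourier support of $F$ to one point; the rest is bookkeeping.
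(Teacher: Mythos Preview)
Your argument is correct and follows a genuinely different route from the paper in the main step. Both proofs recover $\omega\equiv\tilde\omega$ from the zero set and reduce to showing $F:=\psi_+(0,\cdot)-\tilde\psi_+(0,\cdot)\equiv0$. The paper observes that $g_1(k):=\int_{1/2}^1(\tilde q-q)\psi_+\tilde\psi_+\,dx=-\omega(k)F(k)$, forms the even entire function $G(k)=g_1(k)g_1(-k)/[\omega(k)\omega(-k)]=F(k)F(-k)$, and shows $G$ is bounded (hence $\equiv0$) by combining the trivial bound $|g_1(k)g_1(-k)|\le ce^{2|{\rm Im}\,k|}$ with the sharp asymptotics $|\omega({\rm i}\tau)\omega(-{\rm i}\tau)|\sim c\,\tau^{-(m+n+2)}e^{2\tau}$ furnished by its Lemma~1; from $g_1\equiv0$ it concludes via completeness of the products $\psi_+\tilde\psi_+$ (its Lemma~4). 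You instead evaluate the Wronskian at $x=\tfrac12$ to obtain $\psi_-(\tfrac12,k)F(k)=P(k)$, bound $F$ directly on the imaginary axis, localize the Fourier support of $B_0$ by Paley--Wiener, and finish via Borg--Marchenko rather than completeness. The delicate point in your route is the lower bound $\log|\psi_-(\tfrac12,-{\rm i}y)|=\tfrac12y+o(y)$: since the leading term of $\psi_-(\tfrac12,-{\rm i}y)$ is $e^{-y/2}$ (the principal parts cancel in $C-ikS$), this estimate genuinely requires work---either the Cartwright/completely-regular-growth machinery you allude to, together with a Titchmarsh-type statement that $q\in Q^1$ forces the transformation kernel to reach its endpoint, or, more elementarily and in the spirit of the paper's Lemma~1, an iterated-integral asymptotic that uses the smoothness in condition~(C) at $x=0$. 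A pleasant by-product of your factorization is that it consumes condition~(C) only at the left endpoint, whereas the paper's Lemma~1 (for $\omega$) uses it at both.
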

Let $N_\omega(r)$ be the number of zeros $\{k_j\}_{j\ge1}$ of the function $\omega(k)$ in (\ref{1}) in the disk $|k|\le r$, namely, $N_\omega(r):={\rm{\# }}\{j:|k_j|\le r\}$. It is known \cite{EK,MZ1} that
 $N_\omega(r)=\frac{2r}{\pi}[1+o(1)]$, $r\to+\infty.$
Let $\Omega$ be a subset of $\{k_j\}_{j\ge1}$, and denote  $N_\Omega(r):={\rm{\# }}\{j:k_j\in\Omega,|k_j|\le r\}$.
\begin{theorem}
Let $q\in Q^1$. If $q(x)$ is known a priori a.e.\! on $[0,a]$ with $a>1/2$, then any subset $\Omega$ satisfying $N_\Omega(r)=\frac{2\gamma r}{\pi}[1+o(1)]$ as $r\to+\infty$ with $\gamma>2(1-a)$ uniquely determines $q(x)$ a.e. \!on $[0,1]$.
\end{theorem}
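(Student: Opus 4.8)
The plan is to argue by contradiction. Suppose $q,\tilde q\in Q^1$ agree a.e.\ on $[0,a]$ and the prescribed set $\Omega$ is a subset (with multiplicities) of the zero sets of both $\omega$ and $\tilde\omega$; we must deduce $q=\tilde q$ a.e.\ on $[a,1]$. Since the left Jost solution $\psi_-(x,k)$ depends only on $q|_{(-\infty,x]}$, we have $\psi_-(a,k)=\tilde\psi_-(a,k)$ and $\psi_-'(a,k)=\tilde\psi_-'(a,k)$ for all $k$ (here $'=\partial_x$), and evaluating the $x$-independent Wronskians at $x=a$ gives $\omega(k)=\{\psi_-(a,k),\psi_+(a,k)\}$, $\tilde\omega(k)=\{\psi_-(a,k),\tilde\psi_+(a,k)\}$. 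Introduce the entire function
\[
W(k):=\{\psi_+(a,k),\tilde\psi_+(a,k)\}=\psi_+(a,k)\tilde\psi_+'(a,k)-\psi_+'(a,k)\tilde\psi_+(a,k).
\]
Applying the elementary fact that the three vectors $(\psi_-,\psi_-')(a,k)$, $(\psi_+,\psi_+')(a,k)$, $(\tilde\psi_+,\tilde\psi_+')(a,k)$ in $\mathbb{C}^2$ are linearly dependent (Pl\"ucker relation), and identifying the three $2\times2$ minors with $\omega$, $\tilde\omega$, $W$, one obtains
\[
W(k)\,\psi_-(a,k)=\tilde\omega(k)\,\psi_+(a,k)-\omega(k)\,\tilde\psi_+(a,k),\qquad W(k)\,\psi_-'(a,k)=\tilde\omega(k)\,\psi_+'(a,k)-\omega(k)\,\tilde\psi_+'(a,k).
\]
Since $\psi_-(a,\cdot)$ and $\psi_-'(a,\cdot)$ cannot vanish simultaneously, these identities show that at every $\kappa\in\Omega$ the function $W$ vanishes to order at least the multiplicity of $\kappa$ in $\Omega$; hence $N_\Omega(r)\le n_W(r)$, where $n_W(r)$ counts the zeros of $W$ in $|k|\le r$.

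The next step is an exponential-type bound on $W$. From $\tfrac{d}{dx}\{\psi_+,\tilde\psi_+\}(x,k)=(\tilde q-q)\,\psi_+\tilde\psi_+$ and $\{\psi_+,\tilde\psi_+\}(1,k)=0$ one gets
\[
W(k)=\int_a^1 (q-\tilde q)(\xi)\,\psi_+(\xi,k)\,\tilde\psi_+(\xi,k)\,d\xi .
\]
Using the classical bounds $|\psi_+(\xi,k)|\le C e^{-\xi\,{\rm Im}\,k}$ for ${\rm Im}\,k\ge0$ and $|\psi_+(\xi,k)|\le C e^{(2-\xi)|{\rm Im}\,k|}$ for ${\rm Im}\,k\le0$ (and the same for $\tilde\psi_+$), the integrand is controlled by $e^{-2\xi\,{\rm Im}\,k}$ above and $e^{2(2-\xi)|{\rm Im}\,k|}$ below, so $W$ is an entire function of exponential type, bounded on $\mathbb{R}$, hence of Cartwright class, whose $\mathrm{Im}\,k$-indicator is at most $-2a$ in the upper half-plane and at most $4-2a$ in the lower one; equivalently $W(k)=e^{2\mathrm{i}k}g(k)$ with $g$ of Paley--Wiener type $2(1-a)$. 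By the same zero-counting estimate that yields $N_\omega(r)=\tfrac{2r}{\pi}[1+o(1)]$, we get $n_W(r)\le\tfrac{4(1-a)}{\pi}r\,[1+o(1)]$ as $r\to+\infty$. Comparing with $N_\Omega(r)=\tfrac{2\gamma r}{\pi}[1+o(1)]$ and invoking $\gamma>2(1-a)$ forces $N_\Omega(r)>n_W(r)$ for all large $r$, which contradicts $N_\Omega(r)\le n_W(r)$ unless $W\equiv0$. Hence $W\equiv0$.

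It remains to pass from $\int_a^1(q-\tilde q)(\xi)\,\psi_+(\xi,k)\,\tilde\psi_+(\xi,k)\,d\xi=0$ (for all $k$) to $q=\tilde q$ on $[a,1]$. Inserting the transformation-operator representation $\psi_+(\xi,k)=e^{\mathrm{i}k\xi}+\int_\xi^{2-\xi}K_+(\xi,y)e^{\mathrm{i}ky}\,dy$ and its analogue for $\tilde\psi_+$, the product $\psi_+(\xi,k)\tilde\psi_+(\xi,k)$ becomes $e^{2\mathrm{i}k\xi}$ plus a Fourier integral supported in frequencies $[2\xi,4-2\xi]$; after the substitution $t=2\xi$ the identity takes the form $\int_{2a}^{4-2a}\Phi(t)e^{\mathrm{i}kt}\,dt=0$ for all $k$, whence $\Phi\equiv0$ a.e.\ by injectivity of the Fourier transform. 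Isolating the part of $\Phi$ on $[2a,2]$ produces a Volterra integral equation of the second kind for $(q-\tilde q)$ on $[a,1]$ whose kernel is built from the locally integrable transformation kernels $K_+,\tilde K_+$; Gronwall's inequality then gives $q-\tilde q=0$ a.e.\ on $[a,1]$, and therefore on $[0,1]$. The routine obstacles are the multiplicity bookkeeping at common zeros of $\psi_-(a,\cdot)$ in the inequality $N_\Omega(r)\le n_W(r)$ and checking the regularity of the final Volterra kernel for $L^1$ potentials; the genuine content — and the precise point where the threshold $\gamma>2(1-a)$ is forced — is the exponential-type estimate for $W$.
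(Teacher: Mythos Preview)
Your argument is correct and follows essentially the same route as the paper: the key function $W(k)=\{\psi_+,\tilde\psi_+\}(a,k)$ is (up to sign) the paper's $g_1(k)$, your Pl\"ucker identity is exactly the paper's formula $g_1(k)=\omega(k)\tilde\psi_+(0,k)-\tilde\omega(k)\psi_+(0,k)$ transported to $x=a$, and the contradiction via zero-counting against the exponential type $4(1-a)$ is the same Levin-type estimate (the paper's Lemma~4). The only cosmetic differences are that the paper symmetrizes to $g(k)=g_1(k)g_1(-k)$ (unnecessary for this theorem, carried over from Theorem~1) and invokes Ramm's Property~C for the final completeness step where you sketch the transformation-operator/Volterra argument directly.
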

\begin{remark}
 Theorem 1 is analogous to Hochstadt-Lieberman's theorem \cite{HL}, and Theorem 2 is similar to the theorem 1.3 of Gesztesy and Simon \cite{GS}.
\end{remark}
Let $S$ be a subset of the set $\{\zeta_j\}_{j\ge1}$ of zeros of the function $s(k)$ in (\ref{1}), and denote  $\Sigma:=\{\sigma_j:\zeta_j\in S\}$ and $N_\Sigma(r):={\rm{\# }}\{j:\sigma_j\in \Sigma,|\zeta_j|\le r\}$.
\begin{theorem}
Let $q\in Q^1$. If $q(x)$ is known a priori a.e.\! on $[0,a]$ with $a<1/2$, then the set $\{k_j\}_{j\ge1}\cup \Sigma$ satisfying $N_{\Sigma}(r)=\frac{2\beta r}{\pi}[1+o(1)]$ as $r\to+\infty$ with $\beta>1-2a$ uniquely determines $q(x)$ a.e. \!on $[0,1]$.
\end{theorem}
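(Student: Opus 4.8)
The plan is a contradiction argument. Suppose two potentials $q,\tilde q\in Q^1$ produce the same set $\{k_j\}_{j\ge1}$ and the same data $\Sigma$, and satisfy $q=\tilde q$ a.e.\ on $[0,a]$; the goal is $q=\tilde q$ a.e.\ on $[0,1]$. Since $\omega$ is entire of exponential type with the normalization coming from $\omega(k)=2ik/T(k)$, its full zero set determines it (see \cite{EK}), so $\omega\equiv\tilde\omega$. Because $q=\tilde q$ on $[0,a]$, the left Jost solutions coincide to the left of $a$: $\psi_-(x,k)=\tilde\psi_-(x,k)$ for all $x\le a$ and $k\in\mathbb{C}$. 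Evaluating the $x$-independent Wronskians in (\ref{1}), written for $q$ and for $\tilde q$, at $x=a$ and subtracting (using $\psi_-(a,\cdot)=\tilde\psi_-(a,\cdot)$, $\psi_-'(a,\cdot)=\tilde\psi_-'(a,\cdot)$) gives
\[
0=\omega(k)-\tilde\omega(k)=\{\psi_-(x,k),\ \psi_+(x,k)-\tilde\psi_+(x,k)\}\big|_{x=a}.
\]
Since $\psi_-(a,\cdot)$ and $\psi_-'(a,\cdot)$ have no common zero in $k$ (a simultaneous zero would force $\psi_-(\cdot,k)\equiv0$), this identity forces $\psi_-(a,\cdot)$ to divide $p(k):=\psi_+(a,k)-\tilde\psi_+(a,k)$ in the ring of entire functions, say $p=\psi_-(a,\cdot)\,g$ with $g$ entire, and then also $\psi_+'(a,k)-\tilde\psi_+'(a,k)=\psi_-'(a,k)\,g(k)$. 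Performing the same subtraction for $s(k)=-\{\psi_-(x,-k),\psi_+(x,k)\}$ and using $\{\psi_-(x,-k),\psi_-(x,k)\}\equiv-2ik$ (evaluate for $x\le0$), the difference collapses to the key relation
\[
s(k)-\tilde s(k)=2ik\,g(k),\qquad k\in\mathbb{C}.
\]
Thus everything reduces to proving $g\equiv0$: if $g\equiv0$ then $\psi_+(a,k)=\tilde\psi_+(a,k)$ and $\psi_+'(a,k)=\tilde\psi_+'(a,k)$ for all $k$, so the Weyl $m$-function at $x=a$ of the half-line problem on $[a,\infty)$ is the same for $q$ and $\tilde q$, whence $q=\tilde q$ on $[a,\infty)$ by Borg--Marchenko uniqueness (cf.\ \cite{GS}); together with $q=\tilde q$ on $[0,a]$ this completes the proof.

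To control $g$, I would use the transformation-operator representations $\psi_+(a,k)=e^{ika}+\int_a^{2-a}A_+(a,t)e^{ikt}\,dt$ and $\psi_-(a,k)=e^{-ika}+\int_{-a}^{a}A_-(a,t)e^{ikt}\,dt$ valid for $a\in(0,1)$. The terms $e^{ika}$ cancel in $p$, so $p$ is entire of exponential type with indicator diagram inside the segment $[ia,\,i(2-a)]$ (in particular $p$ decays like $e^{-a\,{\rm Im}\,k}$ in the upper half-plane and is of type at most $2-a$ in the lower half-plane), while $\psi_-(a,\cdot)$ is of completely regular growth with indicator diagram exactly $[-ia,ia]$; this last point is where the $Q^1$ hypothesis enters, through the non-vanishing of $q$ in a right neighbourhood of $0$, which keeps the kernel $A_-(a,\cdot)$ nontrivial at the endpoint of $[-a,a]$ not carrying $e^{-ika}$, so that the exponential type of $\psi_-(a,\cdot)$ is exactly $a$ in both imaginary directions. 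Since $g=p/\psi_-(a,\cdot)$ is entire and $\psi_-(a,\cdot)$ has completely regular growth, the indicator diagram of $g$ is the Minkowski difference $[ia,i(2-a)]\ominus[-ia,ia]=[2ia,\,i(2-2a)]$, a segment of length at most $2-4a$, which is non-degenerate precisely because $a<1/2$. As $g=(s-\tilde s)/(2ik)$ is bounded on $\mathbb{R}$, the Levinson density theorem for functions of exponential type in the Cartwright class gives
\[
n_g(r):=\#\{k:\ g(k)=0,\ |k|\le r\}\ \le\ \frac{(2-4a)\,r}{\pi}\,[1+o(1)],\qquad r\to+\infty.
\]

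On the other hand, from $\omega\equiv\tilde\omega$ together with the relation $\omega(k)\omega(-k)-s(k)s(-k)=4k^2$ (an identity of entire functions, since it holds for real $k$ by $|T|^2+|R_+|^2=1$) one obtains $s(k)s(-k)=\tilde s(k)\tilde s(-k)$, so $s$ and $\tilde s$ have the same zero-pairs $\{\zeta,-\zeta\}$; the hypothesis that $q$ and $\tilde q$ carry the same $\Sigma$ then means that $\tilde s$ vanishes at every $\zeta_j\in S$, hence $(s-\tilde s)(\zeta_j)=0$ there. Therefore $n_g(r)=n_{s-\tilde s}(r)-1\ge N_\Sigma(r)=\frac{2\beta r}{\pi}[1+o(1)]$. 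Comparing with the previous display forces $2\beta\le2-4a$, i.e.\ $\beta\le1-2a$, contradicting $\beta>1-2a$. Hence $g\equiv0$, and the proof concludes as explained above.

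The step I expect to be the real obstacle is the sharp estimate of the indicator diagram of $g$ — i.e.\ that dividing $p$ by the Jost value $\psi_-(a,\cdot)$ yields an entire function whose indicator is \emph{exactly} the difference of the indicators. This needs the completely regular growth of $\psi_-(a,\cdot)$ and, crucially, the equality (not merely an inequality) of its exponential type with $a$ in the downward imaginary direction; that equality is precisely where the $Q^1$ non-vanishing of $q$ near $x=0$ is essential, and without it the length of the indicator diagram of $g$, and hence the density threshold $1-2a$, would degrade. Minor further points are the exact form of the Levinson counting bound in this class and the bookkeeping of multiplicities and of real zeros of $s$ in $N_\Sigma$, both routine.
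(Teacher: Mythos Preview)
Your approach is genuinely different from the paper's, and the step you single out as ``the real obstacle'' is indeed where the argument is incomplete as written.

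\textbf{What the paper does.} The paper never divides by $\psi_-(a,\cdot)$ and never needs completely regular growth or exact indicator diagrams. It works directly with
\[
g_1(k)=\int_a^1[\tilde q(x)-q(x)]\psi_+(x,k)\tilde\psi_+(x,k)\,dx,\qquad g(k)=g_1(k)g_1(-k),
\]
which admits the immediate bound $|g(k)|\le c\,e^{4(1-a)|{\rm Im}\,k|}$ straight from the integral. Two algebraic identities,
\[
g_1(k)=\omega(k)\tilde\psi_+(0,k)-\tilde\omega(k)\psi_+(0,k)
\quad\text{and}\quad
g_1(k)=-s(k)\tilde\psi_+(0,k)+\tilde s(k)\psi_+(0,k),
\]
show that $g_1$ vanishes at every $k_j$ (since $\omega=\tilde\omega$) and at every $\zeta_j\in S$ (since $\omega$ together with $\Sigma$ determines $S$, and then $\tilde s(\zeta_j)=0$ as well). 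Counting zeros of $g$ against its exponential type via Levin's inequality (the paper's Lemma~4) forces $1+\beta\le 2(1-a)$, i.e.\ $\beta\le 1-2a$, a contradiction; hence $g_1\equiv0$ and Lemma~5 finishes. In fact your $g$ and the paper's $g_1$ are related by $g_1(k)=-\omega(k)\,g(k)$: you have simply factored out the $\omega$-zeros in advance, trading them for a sharper growth estimate you then have to earn.

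\textbf{Where your argument is fragile.} Everything hinges on the equality $h_{\psi_-(a,\cdot)}(-\pi/2)=a$ (and CRG in the lower half-plane), so that the indicator of $g=p/\psi_-(a,\cdot)$ comes out with width exactly $2-4a$. Writing $\psi_-(a,k)=C(a,k)-ikS(a,k)$ with $C,S$ the cosine/sine solutions, the leading contributions $\cos(ka)$ and $-i\sin(ka)$ cancel along the negative imaginary axis, so the type there is governed by the next-order terms; equivalently, in the kernel picture one needs nontrivial behavior of $K_-(a,t)$ near $t=-a$. The $Q^1$ hypothesis (non-vanishing of $q$ near $0$) makes this plausible, but it does not follow by inspection, and CRG is a further nontrivial claim. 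Note that the paper invokes the stronger smoothness condition~(C) precisely when such exact lower-half-plane asymptotics are required (Theorem~1, via Lemma~1), and pointedly does \emph{not} need (C) in Theorem~3. If $h_{\psi_-(a,\cdot)}(-\pi/2)=a'<a$, your density bound degrades to $2\beta\le 2-3a-a'$, and you lose the threshold $1-2a$. So as it stands this is a gap: either you must supply a proof that $Q^1$ alone yields the exact type and CRG of $\psi_-(a,\cdot)$ in $\mathbb{C}_-$, or switch to the paper's integral function $g_1$, which sidesteps the issue entirely.
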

\begin{remark}
Roughly speaking, when the number $a$ is close enough to $1/2$, then $\beta$ is close enough to zero, which implies $N_\Sigma(r)$ tends to zero. This illustrates that for the case $a$ being close to $1/2$, the given set $\{k_j\}_{j\ge1}$ can uniquely recover the potential $q$ on $[0,1]$. Similarly, when $a$ is close enough to zero, the given sets $\{k_j\}_{j\ge1}$ and $\{\sigma_j\}_{j\ge0}$ can give a unique recovery of the potential.
\end{remark}

The inverse problem for a differential operator with interior spectral data consists in reconstruction of this operator from
the known eigenvalues and some information on eigenfunctions at some internal point, which has been studied by some authors (see \cite{MT,YY} and other works).
In this paper, we also formulate a uniqueness theorem for reconstructing the potential from the following data:
all eigenvalues and resonances, together with a set of logarithmic derivative
values of eigenfunctions and wave-functions at the middle point.

Denote
\begin{equation*}
  \tau_k:=\frac{d}{dx}\log\psi_+(x,k)|_{x=1/2}=\frac{\psi_+'(\textstyle\frac{1}{2},k)}{\psi_+(\frac{1}{2},k)}.
\end{equation*}
\begin{theorem}
Under the condition (C), if $k_i\ne k_j$ for $i\ne j$, then $q(x)$ a.e. \!on $[0,1]$ is uniquely determined by $\{k_j,\tau_{k_j}\}_{j\ge1}$.
\end{theorem}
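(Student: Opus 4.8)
The plan is to take two potentials $q,\tilde q$, both satisfying condition (C), both having only simple zeros of $\omega$, and both producing the data $\{k_j,\tau_{k_j}\}_{j\ge1}$, and to deduce $q=\tilde q$ a.e.\ on $[0,1]$; objects attached to $\tilde q$ carry a tilde. By hypothesis $\omega$ and $\tilde\omega$ have the same simple zero set $\{k_j\}$ and $\tilde\tau_{k_j}=\tau_{k_j}$ for every $j$. I would begin from the factorization
\[
\omega(k)=\psi_-(\tfrac12,k)\,\psi_+(\tfrac12,k)\bigl(\tau_k-m_-(k)\bigr),\qquad m_-(k):=\frac{\psi_-'(\tfrac12,k)}{\psi_-(\tfrac12,k)},
\]
which is immediate from (\ref{1}); it shows that $\omega(k_j)=0$ forces $m_-(k_j)=\tau_{k_j}$, and likewise $\tilde m_-(k_j)=\tilde\tau_{k_j}=\tau_{k_j}$. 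Hence the logarithmic derivatives of $\psi_-$ and $\tilde\psi_-$ at $x=\tfrac12$ agree at every $k_j$, and by the definition of $\tau_k$ the same holds for $\psi_+$ and $\tilde\psi_+$.

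Next I would introduce the entire functions
\[
\Phi(k):=\{\psi_+(x,k),\tilde\psi_+(x,k)\}\big|_{x=1/2},\qquad \Psi(k):=\{\psi_-(x,k),\tilde\psi_-(x,k)\}\big|_{x=1/2}.
\]
Since $\psi_+,\tilde\psi_+$ share the Cauchy data $(e^{{\rm i}k},{\rm i}ke^{{\rm i}k})$ at $x=1$, and $\psi_-,\tilde\psi_-$ share $(1,-{\rm i}k)$ at $x=0$, the Lagrange identity yields
\[
\Phi(k)=\int_{1/2}^{1}(q-\tilde q)\,\psi_+\tilde\psi_+\,dx,\qquad \Psi(k)=\int_{0}^{1/2}(\tilde q-q)\,\psi_-\tilde\psi_-\,dx .
\]
From the agreement of the logarithmic derivatives at each $k_j$ one checks directly that $\Phi(k_j)=\Psi(k_j)=0$ for all $j$ (this stays valid in the exceptional case $\psi_+(\tfrac12,k_j)=0$, in which then $\psi_-(\tfrac12,k_j)=\tilde\psi_\pm(\tfrac12,k_j)=0$ too). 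Because the $k_j$ are simple zeros of $\omega$, the function $g(k):=\Phi(k)\Psi(k)/\omega(k)^2$ is entire.

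The heart of the argument is to prove $g\equiv0$. Inserting the transformation-operator representations of $\psi_\pm$, with the support properties $\psi_-(x,k)=e^{-{\rm i}kx}+\int_{-x}^{x}(\cdots)$ and $\psi_+(x,k)=e^{{\rm i}kx}+\int_{x}^{2-x}(\cdots)$, one obtains, for $x\in[0,\tfrac12]$ resp.\ $[\tfrac12,1]$, the bounds $|\psi_-(x,k)|\lesssim e^{x|{\rm Im}\,k|}$ and $|\psi_+(x,k)|\lesssim e^{-x\,{\rm Im}\,k}$ on $\overline{\mathbb{C}_+}$, and $|\psi_+(x,k)|\lesssim e^{(2-x)|{\rm Im}\,k|}$ on $\overline{\mathbb{C}_-}$. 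Multiplying these and integrating gives $|\Phi(k)\Psi(k)|\lesssim 1$ on $\overline{\mathbb{C}_+}$ and $|\Phi(k)\Psi(k)|\lesssim e^{4|{\rm Im}\,k|}$ on $\overline{\mathbb{C}_-}$. On the other hand $\omega(k)=2{\rm i}k\bigl(1+o(1)\bigr)$ in $\overline{\mathbb{C}_+}$, so $|\omega(k)|\ge|k|$ there for $|k|$ large; and the resonance‑counting asymptotics $N_\omega(r)=\tfrac{2r}{\pi}\bigl(1+o(1)\bigr)$, together with the known asymptotic form of $\omega$ deep in the lower half-plane, give the matching lower bound $|\omega(k)|\gtrsim|k|^{-1}e^{2|{\rm Im}\,k|}$ along every fixed ray into the open lower half-plane. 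Consequently $g$ is bounded on $\overline{\mathbb{C}_+}$, is of finite exponential type (its zero multiset is contained in that of the finite-type function $\Phi\Psi$), and satisfies $|g(k)|\lesssim|k|^{2}$ along every ray into the open lower half-plane; by trigonometric convexity and continuity of the indicator this forces $g$ to be of exponential type zero. Being in addition $O(|k|^{-2})$ on $\mathbb{R}$, $g$ is bounded on $\mathbb{C}$ by Phragm\'en--Lindel\"of, hence constant by Liouville, hence $g\equiv0$. Therefore $\Phi\Psi\equiv0$, so $\Phi\equiv0$ or $\Psi\equiv0$.

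Suppose first $\Psi\equiv0$. Expanding $\psi_-\tilde\psi_-$ by transformation operators, interchanging the order of integration and using the injectivity of the Fourier(--Laplace) transform on $L^1$, the identity $\int_0^{1/2}(\tilde q-q)\psi_-\tilde\psi_-\,dx\equiv0$ reduces to a homogeneous Volterra equation of the second kind for $q-\tilde q$ on $[0,\tfrac12]$, whence $q=\tilde q$ a.e.\ on $[0,\tfrac12]$; Theorem 1 (whose hypothesis is precisely condition (C)), applied with the known set $\{k_j\}$, then gives $q=\tilde q$ a.e.\ on $[0,1]$. If instead $\Phi\equiv0$, the same completeness computation gives $q=\tilde q$ a.e.\ on $[\tfrac12,1]$, and applying Theorem 1 after the involution $x\mapsto1-x$ — which preserves condition (C) and leaves $\omega$, hence $\{k_j\}$, unchanged — again yields $q=\tilde q$ a.e.\ on $[0,1]$. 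I expect the main obstacle to be the step $\Phi\Psi\equiv0$: it succeeds only because the exponential-type estimates for $\Phi$, $\Psi$ and $\omega^2$ balance \emph{exactly} in both half-planes, which is exactly why the interior point must be the midpoint $\tfrac12$ and the support must be $[0,1]$, and the sharp lower bound for $|\omega|$ in the lower half-plane (supplied by the resonance asymptotics) is the delicate ingredient.
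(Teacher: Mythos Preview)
Your argument is essentially correct and reaches the conclusion, but it takes a genuinely different route from the paper and contains two soft spots you should patch. The paper works only with $g_1(k)=-\Phi(k)$ and forms the \emph{even} function $G(k)=g_1(k)g_1(-k)/[\omega(k)\omega(-k)]$; evenness lets the Phragm\'en--Lindel\"of step run entirely in $\overline{\mathbb{C}}_+$, where Lemma~\ref{l2} supplies the imaginary-axis asymptotics of $\omega(\mathrm{i}\tau)\omega(-\mathrm{i}\tau)$ and Lemma~\ref{l3} then gives boundedness. You instead introduce a second function $\Psi$ via the (correct and pleasant) observation $m_-(k_j)=\tau_{k_j}$, and analyse $g=\Phi\Psi/\omega^2$, which is not even and forces you to control growth in $\mathbb{C}_-$ as well. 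That works in principle, but: (i) your justification that $g$ has finite exponential type (``its zero multiset is contained in that of the finite-type function $\Phi\Psi$'') is not a valid argument---zero-set containment says nothing about type; one repairs this with a minimum-modulus estimate for $\omega$ outside small disks around its zeros. (ii) The ray-wise lower bound $|\omega(k)|\gtrsim|k|^{-1}e^{2|\mathrm{Im}\,k|}$ in the open lower half-plane is asserted but not supplied by the paper (Lemma~\ref{l2} covers only the negative imaginary axis, and with exponent $-(m+n+3)$ rather than $-1$); however, once (i) is fixed, the imaginary-axis bound alone plus trigonometric convexity of the indicator already gives $h_g\le0$ everywhere, so the ray-wise claim is superfluous. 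The paper's symmetrization sidesteps both issues and yields $g_1\equiv0$ directly, without your case split ``$\Phi\equiv0$ or $\Psi\equiv0$''; your endgame (Lemma~\ref{l5} on the relevant half-interval, then Theorem~1, using that the reflection $x\mapsto1-x$ preserves $\omega$) matches the paper's.
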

\begin{remark}
If $k_j$ $(j\ge1)$ possesses multiplicities $m_j$, then Theorem 4 is also true provided that $\frac{d^i \tau_k}{dk^i}|_{k=k_j}$ with $i=\overline{0,m_j-1}$  are given.
\end{remark}
\section{Preliminaries}
To prove Theorems 1-4 we need some preliminaries. In this section, let us first recall some relations between the function $\omega(k)$ and Jost solutions $\psi_\pm(x,k)$ (see \cite{EK,VM}), and then  provide four lemmas.

It is known that the Jost solution $\psi_+(x,k)$ satisfies the integral equation
\begin{equation}\label{jjq}
\psi_+(x,k)=e^{ik x}+\frac{1}{2ik}\int_x^{1}\left[e^{ik(t-x)}-e^{-ik(t-x)}\right]q(t)\psi_+(t,k)dt,
\end{equation}
and the asymptotics
\begin{equation}\label{jjq1}
\psi_+(x,k)=e^{ik x}[1+o(1)],\quad  |k|\to\infty,\quad k\in \overline{\mathbb{C}}_+:=\mathbb{C}_+\cup \mathbb{R}.
\end{equation}
Here the asymptotic estimate (\ref{jjq1}) is uniform for $x\ge0$.
Taking $x=0$ in the first equation in (\ref{1}) and  noting that $\psi_-(0,k)=1$ and $\psi_-'(0,k)=-{\rm i}k$, we obtain
 \begin{equation}\label{jjq2}
   \omega(k)={\psi}_+'(0,k)+{\rm i}k{\psi}_+(0,k).
 \end{equation}
 Substituting (\ref{jjq}) with $x=0$ into (\ref{jjq2}), one obtains
 \begin{equation}\label{q4}
\omega(k)=2{\rm i}k-\int_0^1q(x)e^{-{\rm i}k x}\psi_+(x,k)dx.
\end{equation}
 Substituting (\ref{jjq1}) into (\ref{q4}), one gets
 \begin{equation}\label{zx2}
  \omega(k)=2{\rm i}k+O(1),\quad |k|\to\infty,\quad k\in \overline{\mathbb{C}}_+.
\end{equation}
\begin{lemma}\label{l2}
Under the condition (C), there exists a nonvanishing  constant $c_0$ such that
\begin{equation}\label{q3}
\!\!  \!\!\omega({\rm i}\tau)\!=\!\left\{\begin{split}
              &-2\tau+O(1) ,\quad \tau\to+\infty\\
             & \frac{c_0}{\tau^{m+n+3}}e^{-2\tau}[1+o(1)],\quad\tau\to- \infty.
            \end{split}\right.
\end{equation}
\end{lemma}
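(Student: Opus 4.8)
The plan is to analyze the asymptotic behavior of the entire function $\omega({\rm i}\tau)$ for real $\tau$ along the imaginary axis, using the integral representation \eqref{q4}. For $\tau\to+\infty$, this is essentially immediate: from \eqref{zx2} with $k={\rm i}\tau$ we get $\omega({\rm i}\tau)=2{\rm i}({\rm i}\tau)+O(1)=-2\tau+O(1)$, which is the first branch. The substance of the lemma is the second branch, the behavior as $\tau\to-\infty$, which requires a more refined analysis than the crude bound \eqref{zx2} (that bound is useless here since along the lower imaginary axis $\psi_+$ grows exponentially). The key idea is that as $\tau\to-\infty$, setting $k={\rm i}\tau$ with $\tau<0$, the exponential weight $e^{-{\rm i}kx}=e^{\tau x}$ in \eqref{q4} concentrates the integral $\int_0^1 q(x)e^{\tau x}\psi_+(x,{\rm i}\tau)\,dx$ near the right endpoint $x=1$, where $e^{\tau x}$ is smallest in absolute value — wait, more precisely we must track the product $e^{\tau x}\psi_+(x,{\rm i}\tau)$, and from \eqref{jjq} one sees $\psi_+(x,{\rm i}\tau)\sim e^{{\rm i}k x}=e^{-\tau x}$ up to corrections, so the product $e^{-{\rm i}kx}\psi_+\sim 1$ near $x=1$ but one needs the sharp form. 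I would instead use the companion Jost-type solution or directly the known representation $\psi_+(x,k)=e^{{\rm i}kx}+\int_x^1 K(x,t)e^{{\rm i}kt}\,dt$ for the transformation kernel $K$, so that $e^{-{\rm i}kx}\psi_+(x,k)=1+\int_x^1 K(x,t)e^{{\rm i}k(t-x)}\,dt$; then \eqref{q4} becomes $\omega(k)=2{\rm i}k-\int_0^1 q(x)e^{-{\rm i}kx}\,dx - \int_0^1 q(x)\Big(\int_x^1 K(x,t)e^{{\rm i}k(t-x)}\,dt\Big)dx$.

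The next step is to isolate the dominant term as $\tau\to-\infty$, $k={\rm i}\tau$. Each of the three pieces must be examined: $2{\rm i}k=-2\tau$ grows only linearly; the term $\int_0^1 q(x)e^{\tau x}\,dx$ is, by Watson's lemma applied at the endpoint $x=1$ (after the change of variable $x\mapsto 1-s$), asymptotic to $e^{\tau}$ times a series in $1/\tau$ whose leading term is governed by the first nonvanishing Taylor coefficient of $q$ at $1$ — by the condition (C), $q^{(v)}(1)=0$ for $v=\overline{0,n-1}$ and $q^{(n)}(1)\ne0$, so this piece is $\sim \text{const}\cdot e^{\tau}\tau^{-(n+1)}$. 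The double-integral term carries an extra concentration at $t=1$ inside the inner integral and at the corresponding diagonal behavior of $K$ near $(1,1)$; using the standard smoothness and boundary structure of the transformation kernel (which inherits the vanishing orders of $q$ at the endpoints), I expect this term to produce the overall factor $e^{2\tau}\tau^{-(m+n+3)}$ — the exponent $m+n+3$ arising as roughly $(n+1)$ from the $q$-vanishing at $1$, plus $(m+1)$ from the $q$-vanishing at $0$ transmitted through $K$, plus an extra $1$ from the double endpoint Laplace integration — while the single-integral term $e^{\tau}\tau^{-(n+1)}$, having the larger exponential $e^{\tau}\gg e^{2\tau}$ as $\tau\to-\infty$... here I must be careful with signs: for $\tau\to-\infty$, $e^{2\tau}\to0$ faster than $e^{\tau}$, so actually $e^{\tau}$ dominates $e^{2\tau}$. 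Hence the claimed leading term $\frac{c_0}{\tau^{m+n+3}}e^{-2\tau}$ — note the stated exponent is $e^{-2\tau}$, which \emph{grows} as $\tau\to-\infty$ — cannot come from \eqref{q4} directly; rather one must use the reflection/symmetry relation $\omega(-k)$ or the identity expressing $\omega(k)$ via $\psi_+$ evaluated using the \emph{other} Jost solution, equivalently write $\psi_+(x,k)$ in terms of $\psi_-(x,\pm k)$ via the scattering relation $2{\rm i}k\,\psi_+(x,k)=\omega(k)\overline{\psi_-(x,\bar k)}+\dots$, so that the genuine exponential growth $e^{-2\tau}=e^{-2{\rm i}k}$ enters through the factor $e^{{\rm i}k\cdot 1}\cdot e^{-{\rm i}k\cdot 0}$ relating the two endpoints.

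Thus the correct route is: express $\omega(k)$ through the pair $\psi_\pm$ and extract that $\omega(k)$, as $|{\rm Im}\,k|\to\infty$ in $\mathbb{C}_-$, behaves like $c_0 k^{-(m+n+3)}e^{-{\rm i}k}\cdot(\text{something})$, using the Paley–Wiener structure of $\psi_-(x,k)$ on $x\le 0$ extended, together with the endpoint Taylor data of $q$. Concretely: since $\psi_-(x,-k)$ for $x$ near $1$ behaves, via its own transformation operator, like $e^{{\rm i}kx}$ plus a kernel supported on $[0,1]$, the Wronskian $s(k)=-\{\psi_-(x,-k),\psi_+(x,k)\}$ and the companion computation of $\omega(k)=\{\psi_-(x,k),\psi_+(x,k)\}$ evaluated at $x=1$ (where $\psi_+(1,k)=e^{{\rm i}k}$, $\psi_+'(1,k)={\rm i}k e^{{\rm i}k}$) give $\omega(k)=e^{{\rm i}k}\big[\psi_-'(1,k)-{\rm i}k\,\psi_-(1,k)\big]$, and now $\psi_-(1,k)=e^{-{\rm i}k}+\int_0^1 \widetilde{K}(1,t)e^{-{\rm i}kt}\,dt$, so $\psi_-'(1,k)-{\rm i}k\psi_-(1,k) = -2{\rm i}k e^{-{\rm i}k}+\text{(boundary + integral terms)}$; along $k={\rm i}\tau$, $\tau\to-\infty$, the surviving exponential is $e^{{\rm i}k}\cdot(\text{const})=e^{-\tau}$... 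I see the bookkeeping is delicate, and \textbf{the main obstacle is precisely this}: correctly identifying which endpoint contributions survive and assembling the exponent $m+n+3$ by a careful double application of Watson's lemma to the transformation-kernel representation, while keeping exact track of the sign of $\tau$ so that the exponential is $e^{-2\tau}$ (growing) rather than a decaying exponential. Once the representation $\omega({\rm i}\tau)=e^{-2\tau}\int_0^1\!\!\int_0^1(\cdots)q(\cdots)\,dx\,dt\cdot(1+o(1))$ is in hand, a standard Laplace-type endpoint asymptotic with the vanishing orders $m,n$ of $q$ at $0,1$ yields the factor $\tau^{-(m+n+3)}$ and a nonzero constant $c_0$ proportional to $q^{(m)}(0)q^{(n)}(1)\ne0$; the nonvanishing of $c_0$ is exactly where the hypothesis $q^{(m)}(0)q^{(n)}(1)\ne0$ in condition (C) is used, and it is essential for the later use of this lemma (presumably to pin down $m+n$, hence an asymptotic counting or a sharpness statement, from the known function $\omega$).
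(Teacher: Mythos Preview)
Your handling of $\tau\to+\infty$ is fine. The real difficulty is $\tau\to-\infty$, and there your argument has a concrete gap that derails the rest.

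The transformation kernel for $\psi_+$ has the representation
\[
\psi_+(x,k)=e^{{\rm i}kx}+\int_x^{2-x}K(x,t)e^{{\rm i}kt}\,dt,\qquad 0\le x\le 1,
\]
with upper limit $2-x$, not $1$. This is not a cosmetic point: it is exactly what produces the growth $\psi_+(x,{\rm i}\tau)=O(e^{\tau(x-2)})$ as $\tau\to-\infty$, so that the integrand $q(x)e^{\tau x}\psi_+(x,{\rm i}\tau)$ in \eqref{q4} behaves like $q(x)e^{2\tau(x-1)}$ and concentrates at $x=0$, yielding the factor $e^{-2\tau}$. With your upper limit $1$, the product $e^{-{\rm i}kx}\psi_+(x,k)$ stays bounded, and you correctly observe that the resulting terms are at best of order $e^{\tau}$ or polynomial --- far too small. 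Your attempt to recover $e^{-2\tau}$ by switching to $\psi_-$ and evaluating the Wronskian at $x=1$ is circular: you are just moving the same transformation-kernel issue to the other Jost solution, and the bookkeeping you flag as ``delicate'' is in fact impossible with the wrong support.

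The paper's route, once the correct kernel support is in hand, is clean and different in spirit from a Watson's-lemma-on-$K$ argument. It truncates the $x$-integral in \eqref{q4} to $[0,\varepsilon]$ (the remainder being $O(e^{2\tau(\varepsilon-1)})$), and on that small interval expands $\psi_+(x,{\rm i}\tau)$ as the Neumann series $\sum_{j\ge0}\psi_j$ coming from iterating \eqref{jjq}. The zeroth iterate contributes only $O(1)$; the first iterate $\psi_1$ is computed explicitly, and its dominant piece is $-\frac{e^{\tau x}}{2\tau}\int_x^1 e^{-2\tau t}q(t)\,dt$, which, after $n$ integrations by parts near $t=1$ using $q^{(v)}(1)=0$ for $v<n$, gives $\psi_1(x,{\rm i}\tau)\sim q^{(n)}(\xi)(2\tau)^{-(n+2)}e^{\tau(x-2)}$. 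Higher iterates are shown to be smaller by a direct estimate. Substituting back, one is left with $e^{-2\tau}(2\tau)^{-(n+2)}\int_0^\varepsilon q(x)e^{2\tau x}\,dx$, and a second integration-by-parts argument near $x=0$ (using $q^{(u)}(0)=0$ for $u<m$) extracts the remaining $\tau^{-(m+1)}$. The product gives $\tau^{-(m+n+3)}e^{-2\tau}$ with a nonzero constant proportional to $q^{(m)}(0)q^{(n)}(1)$, as you anticipated. Your instinct about the origin of $c_0$ is right; the missing ingredient is the correct growth of $\psi_+$ in $\mathbb{C}_-$ and the Neumann-series mechanism that isolates it.
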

\begin{proof}
The first equation in (\ref{q3}) follows directly from the asymptotic equation (\ref{zx2}).
We next provide the proof of the second one.

It is known \cite{VM} that
\begin{equation*}
\psi_+(x,k)=e^{{\rm i}kx}+\int_x^{2-x}K(x,t)e^{{\rm i}kt}dt,\quad 0\le x\le1,
\end{equation*}
where $K(x,t)$ is  a two-variable function with first-order partial derivatives.
This implies $\psi_+(x,{\rm i}\tau)=O(e^{\tau(x-2)})$ as $\tau\to-\infty$. Let $\varepsilon\in(0,\delta)$ be sufficiently small. It follows from (\ref{q4}) that
\begin{equation}\label{qo4}
\omega({\rm i}\tau)=-\int_0^\varepsilon q(x)e^{\tau x}\psi_+(x,{\rm i}\tau)dx+O(e^{2\tau(\varepsilon-1)}),\quad \tau\to-\infty.
\end{equation}
We shall next investigate the asymptotics of $\psi_+(x,{\rm i}\tau)$  for $x\in[0,\varepsilon]$ as $\tau\to-\infty$.
Taking $k={\rm i}\tau$ in (\ref{jjq}), we have
\begin{equation*}
\psi_+(x,{\rm i}\tau)=e^{-\tau x}-\frac{1}{2\tau}\int_x^{1}\left[e^{-\tau(t-x)}-e^{\tau(t-x)}\right]q(t)\psi_+(t,{\rm i}\tau)dt,\quad 0\le x\le1.
\end{equation*}
Denote
\begin{equation*}
  \psi_0(x,{\rm i}\tau)=e^{-\tau x},\quad \psi_j(x,{\rm i}\tau)=\frac{1}{-2\tau}\int_x^{1}\left[e^{-\tau(t-x)}-e^{\tau(t-x)}\right]q(t)\psi_{j-1}(t,{\rm i}\tau)dt,
\end{equation*}
then (see, for example, \cite[p.103]{FY})
\begin{equation}\label{op0}
 \psi_+(x,{\rm i}\tau)=\sum_{j\ge0}\psi_j(x,{\rm i}\tau),\quad 0\le x\le1.
\end{equation}
By a direct calculation, we get
\begin{equation}\label{op3}
  \psi_1(x,{\rm i}\tau)=\frac{e^{-\tau x}}{2\tau}\int_x^1q(t)dt-\frac{e^{\tau x}}{2\tau}\int_x^1e^{-2\tau t}q(t)dt,\quad 0\le x\le1.
\end{equation}
Since $q\in C^n[1-\varepsilon,1]$ and $q^{(v)}(1)=0$ for $v=\overline{0,n-1}$, by integration by parts, we have
\begin{equation}\label{op}
\begin{split}
\!\!\!\int_x^1e^{-2\tau t}q(t)dt&=\left(\int_x^{1-\varepsilon}+\int_{1-\varepsilon}^1\right)e^{-2\tau t}q(t)dt\\
&=\frac{1}{(2\tau)^{n}}\int_{1-\varepsilon}^1e^{-2\tau t}q^{(n)}(t)dt+O(e^{2\tau(\varepsilon -1)}),\;\;\tau\to-\infty.
\end{split}
\end{equation}
By virtue of $q^{(n)}(1)\ne0$, without loss of generality, assume $q^{(n)}(1)>0$, then $q^{(n)}(x)>0$ for all $x\in[1-\varepsilon,1]$. Thus, by the mean value theorem of integral, we have that there exists $\xi\in[1-\varepsilon,1]$ such that
\begin{equation}\label{op1}
 \int_{1-\varepsilon}^1e^{-2\tau t}q^{(n)}(t)dt=q^{(n)}(\xi)\int_{1-\varepsilon}^1e^{-2\tau t}dt=\frac{q^{(n)}(\xi)}{2\tau}\left[e^{-2\tau(1-\varepsilon)}-e^{-2\tau}\right].
\end{equation}
Substituting (\ref{op1}) into (\ref{op}), we obtain
\begin{equation}\label{op2}
 \int_x^1e^{-2\tau t}q(t)dt=-\frac{q^{(n)}(\xi)e^{-2\tau}}{(2\tau)^{n+1}}[1+o(1)],\quad \tau\to-\infty.
\end{equation}
Substituting (\ref{op2}) into (\ref{op3}), we obtain that, for all $x\in[0,\varepsilon]$,
\begin{equation}\label{op4}
  \psi_1(x,{\rm i}\tau)=\frac{q^{(n)}(\xi)e^{\tau (x-2)}}{(2\tau)^{n+2}}[1+o(1)],\quad 0\le x\le1,\quad \tau\to-\infty.
\end{equation}

Using (\ref{op4}) and successive iteration, we get that for $\tau\to-\infty$,
\begin{equation*}
 | \psi_j(x,{\rm i}\tau)|\le \frac{ce^{\tau(x-2)}Q^{j-1}(x)}{|\tau|^{n+j+1}(j-1)!},\quad c>0,\quad Q(x):=\int_x^1|q(t)|dt,\quad j\ge1,
\end{equation*}
which implies from (\ref{op0}) and (\ref{op4}) that
\begin{equation}\label{op5}
  \psi_+(x,{\rm i}\tau)=\psi_1(x,{\rm i}\tau)[1+o(1)]=\frac{q^{(n)}(\xi)e^{\tau (x-2)}}{(2\tau)^{n+2}}[1+o(1)],\quad \tau\to-\infty.
\end{equation}
The above asymptotic estimate is uniform respect to $x\in[0,\varepsilon]$.
Substituting (\ref{op5}) into (\ref{qo4}), we get
\begin{equation}\label{q5}
\omega({\rm i}\tau)=-\frac{q^{(n)}(\xi)e^{-2\tau}}{(2\tau)^{n+2}}\!\!\int_0^\varepsilon q(x)e^{2\tau x}[1+o(1)]dx+O(e^{2\tau(\varepsilon-1)}),\quad \tau\to-\infty.
\end{equation}
Since $q\in C^m[0,\varepsilon]$ with $q^{(u)}(0)=0$ for $u=\overline{0,m-1}$ and $q^{(m)}(0)\ne0$, without loss of generality, we assume $q^{(m)}(0)>0$, then $q^{(m)}(x)>0$ and $q(x)\ge0$ for all $x\in[0,\varepsilon]$. Thus,
\begin{equation}\label{qq5}
\begin{split}
\int_0^\varepsilon q(x)e^{2\tau x}[1+o(1)]dx=\int_0^\varepsilon q(x)e^{2\tau x}dx[1+o(1)],\quad \tau\to-\infty.
\end{split}
\end{equation}
Following  the similar arguments to Eqs.(\ref{op})$-$(\ref{op2}), we have
\begin{equation}\label{u5}
\begin{split}
\!\!\int_0^\varepsilon\!q(x)e^{2\tau x}dx\!=&\frac{q^{(m)}(\eta)}{(-2\tau)^{m+1}}[1+o(1)],\quad \eta\in[0,\varepsilon],\quad \tau\to-\infty.\\
\end{split}
\end{equation}
Using (\ref{q5})$-$(\ref{u5}) we obtain the second equation in (\ref{q3}).
\end{proof}

\begin{lemma}\emph{(See \cite[p.28]{PK})}\label{l3}
Let $G(k)$ be analytic in $\mathbb{C}_+$ and continuous in $\overline{\mathbb{C}}_+$. Suppose that

(i) $\log|G(k)|=O (k)$ for $|k|\to\infty$ in $\mathbb{C}_+$,

(ii) $|G(x)|\le C$ for some constant $C>0$, $x\in \mathbb{R}$,

(iii)$\mathop {\varlimsup}\limits_{\tau\to+\infty}{\log|G({\rm i}\tau)|}/{\tau}=A$.\\
Then, for $k\in\overline{\mathbb{C}}_+$, there holds
\begin{equation*}
  |G(k)|\le Ce^{A{\rm Im}k}.
\end{equation*}
\end{lemma}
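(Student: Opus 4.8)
The plan is to reduce the claimed growth bound to a boundedness statement for an auxiliary function, and then to extract the \emph{sharp} constant $C$ by invoking the maximum principle for bounded holomorphic functions in a half-plane. I would set $H(k):=G(k)e^{{\rm i}Ak}$, so that $|H(k)|=|G(k)|e^{-A\,{\rm Im}\,k}$ and the desired inequality $|G(k)|\le Ce^{A\,{\rm Im}\,k}$ becomes simply $|H(k)|\le C$ on $\overline{\mathbb{C}}_+$. On $\mathbb{R}$ one has $|H(x)|=|G(x)|\le C$ by hypothesis (ii), while hypothesis (i) gives $\log|H(k)|=\log|G(k)|-A\,{\rm Im}\,k=O(|k|)$, so $H$ is still of exponential type in $\mathbb{C}_+$. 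The obstruction is that hypothesis (iii) controls $H$ along the positive imaginary axis only \emph{asymptotically} (through a $\limsup$), which is too weak to be fed directly into a sector estimate with the correct constant.

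To repair this I would regularize. For $\epsilon>0$ put $H_\epsilon(k):=G(k)e^{{\rm i}(A+\epsilon)k}=H(k)e^{{\rm i}\epsilon k}$. On the real axis $|H_\epsilon(x)|=|G(x)|\le C$ is unchanged, and along the positive imaginary axis hypothesis (iii) yields $\log|H_\epsilon({\rm i}\tau)|=\log|G({\rm i}\tau)|-(A+\epsilon)\tau\le(o(1)-\epsilon)\tau\to-\infty$, so $H_\epsilon({\rm i}\tau)\to0$ and in particular $H_\epsilon$ is \emph{bounded} on the closed positive imaginary axis, say by $M_\epsilon$. Since $\log|H_\epsilon(k)|=O(|k|)$, the function $H_\epsilon$ is of exponential type (order at most one), strictly below the critical order two attached to a quadrant. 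I would therefore apply the Phragmén–Lindelöf principle separately in the first quadrant $\{0\le\arg k\le\pi/2\}$ and the second quadrant $\{\pi/2\le\arg k\le\pi\}$: on the boundary rays of each quadrant $H_\epsilon$ is bounded (by $C$ on the real rays, by $M_\epsilon$ on the imaginary ray), and the subcritical growth lets Phragmén–Lindelöf conclude that $H_\epsilon$ is bounded throughout each quadrant, hence throughout $\overline{\mathbb{C}}_+$.

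At this stage $H_\epsilon$ is a bounded holomorphic function on $\mathbb{C}_+$, continuous up to $\mathbb{R}$, with boundary modulus $\le C$. Transporting to the unit disc by the Cayley transform $w=(k-{\rm i})/(k+{\rm i})$ converts $H_\epsilon$ into a bounded holomorphic function on $\mathbb{D}$ whose boundary values are $\le C$ off the single point corresponding to $k=\infty$; the maximum principle for $H^\infty$ (equivalently, the Poisson representation of bounded holomorphic functions, for which an isolated boundary point is negligible) then gives $|H_\epsilon(k)|\le C$ for every $k\in\mathbb{C}_+$. Unwinding the definition, $|G(k)|\le Ce^{(A+\epsilon)\,{\rm Im}\,k}$ for all $k\in\overline{\mathbb{C}}_+$ and all $\epsilon>0$; letting $\epsilon\to0^+$ yields $|G(k)|\le Ce^{A\,{\rm Im}\,k}$, as required.

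I expect the main obstacle to be exactly the passage from ``bounded'' to ``bounded by the sharp constant $C$.'' A one-shot Phragmén–Lindelöf estimate in the quadrants only returns $\max(C,M_\epsilon)$, because hypothesis (iii) cannot prevent $H_\epsilon$ from exceeding $C$ on a compact initial segment of the imaginary axis. The two-stage device — use Phragmén–Lindelöf merely to place $H_\epsilon$ in the class $H^\infty(\mathbb{C}_+)$, and then let the genuine maximum principle for bounded functions transfer the real-axis bound $C$ into the interior — is what produces the exact constant. The regularizer $e^{{\rm i}\epsilon k}$, removed only at the very end, is indispensable precisely because $\varlimsup_{\tau\to\infty}\log|G({\rm i}\tau)|/\tau=A$ supplies asymptotic, not uniform, decay along the imaginary axis; note that only the half-statement $\varlimsup\le A$ of hypothesis (iii) enters the argument.
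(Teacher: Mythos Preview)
The paper does not supply its own proof of this lemma; it merely quotes the result from Koosis \cite[p.~28]{PK}, so there is nothing in the paper to compare against line by line. That said, your argument is correct and is essentially the classical route one finds in Koosis (and in Levin or Boas): regularize with $e^{{\rm i}(A+\epsilon)k}$, use Phragm\'en--Lindel\"of in each quadrant (order $1$ being subcritical for an angle $\pi/2$) to place $H_\epsilon$ in $H^\infty(\mathbb{C}_+)$, then invoke the Poisson/maximum-modulus identity $\|H_\epsilon\|_{H^\infty}=\operatorname*{ess\,sup}_{x\in\mathbb{R}}|H_\epsilon(x)|\le C$ to extract the sharp constant, and finally let $\epsilon\downarrow0$. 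Your diagnosis of the two obstacles---that a single Phragm\'en--Lindel\"of step would return only $\max(C,M_\epsilon)$, and that hypothesis~(iii) gives merely asymptotic control on the imaginary axis---is exactly why the two-stage device and the $\epsilon$-regularizer are needed; the proof is complete as written.
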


\begin{lemma}[See Chapter IV in \cite{BL}]\label{l4}
For any entire function $g(k)\not\equiv0$ of exponential type, the following inequality holds,
 \begin{equation*}
\mathop {\varliminf }\limits_{r \to \infty }  \frac{N_g(r)}{r}\leq\frac{1}{2\pi}\int_0^{2\pi}h_g(\theta)d\theta,
 \end{equation*}
where $N_g(r)$ is the number of zeros of $g(k)$ in the disk $|k|\leq r$ and $h_g(\theta):=\mathop {\varlimsup }\limits_{r \to \infty }\frac{\ln |g(re^{{\rm i}\theta})|}{r}$ with $k=re^{{\rm i}\theta}$.
\end{lemma}
Together with the operator $L(q)$ we consider another operator
$L(\tilde{q})$ of the same form but with
different potential $\tilde{q}$. We agree that if a certain
symbol $\delta$ denotes an object related to $L(q)$, then
$\tilde{\delta}$ will denote an analogous object related to
$L(\tilde{q})$.
\begin{lemma}[See \cite{GR1}]\label{l5}
For the arbitrary function $h\in L^1[b,1]$ with $b\in[0,1)$,  if
\begin{equation*}
\int_b^{1} \!\!\!h(x)\psi_+(x,k)\tilde{\psi_+}(x,k)dx=0,\quad \forall k>0,
\end{equation*}
then $h(x)=0$ a.e. on $[b,1]$.
\end{lemma}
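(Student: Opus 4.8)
The plan is to turn the hypothesis into a statement about the Fourier transform of a single compactly supported $L^1$ density, and then to peel off $h$ using the triangular (Volterra) structure of that density. I would start from the transformation-operator representation already invoked in the proof of Lemma \ref{l2}, namely $\psi_+(x,k)=e^{ikx}+\int_x^{2-x}K(x,t)e^{ikt}dt$ and $\tilde\psi_+(x,k)=e^{ikx}+\int_x^{2-x}\tilde K(x,t)e^{ikt}dt$ for $0\le x\le1$, where $K(x,\cdot),\tilde K(x,\cdot)$ are bounded kernels (the cited first-order differentiability gives this). Multiplying the two representations and making the substitutions $\xi=x+t$, $\xi=x+s$, $\xi=t+s$ in the three non-leading terms, one obtains a product representation of transformation-operator type,
\begin{equation*}
\psi_+(x,k)\tilde\psi_+(x,k)=e^{2ikx}+\int_{2x}^{4-2x}P(x,\xi)e^{ik\xi}d\xi,
\end{equation*}
where $P(x,\cdot)\in L^1$ is assembled from $K$, $\tilde K$ and the convolution $K(x,\cdot)\ast\tilde K(x,\cdot)$; the essential feature is that its $\xi$-support begins exactly at $2x$, the same exponent carried by the leading term $e^{2ikx}$.

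Next I would substitute this into $F(k):=\int_b^1 h(x)\psi_+(x,k)\tilde\psi_+(x,k)\,dx$ and interchange the order of integration (Fubini applies, as $h\in L^1$ and the kernels are bounded on the compact region). After the change of variable $\xi=2x$ in the leading term, this recasts $F$ as the Fourier transform of a single compactly supported $L^1$ function,
\begin{equation*}
F(k)=\int_{2b}^{4-2b}G(\xi)e^{ik\xi}d\xi,
\end{equation*}
and a careful accounting of the support constraints shows that on the initial window $\xi\in[2b,2]$ one has $G(\xi)=\tfrac12 h(\xi/2)+\int_b^{\xi/2}h(x)M(x,\xi)\,dx$ for a bounded kernel $M$. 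The key observation is that for $\xi\le2$ all three non-leading contributions involve $h(x)$ only for $x\le\xi/2$, since the constraint $\xi\le4-2x$ is inactive there; thus the integral operator is genuinely triangular in $\xi$.

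Since $G\in L^1$ has compact support, $F$ is entire (Paley--Wiener); the hypothesis $F(k)=0$ for all $k>0$ then forces $F\equiv0$ by the identity theorem, and uniqueness of the Fourier transform yields $G=0$ a.e. Restricting the formula for $G$ to $[2b,2]$ and writing $\xi=2y$ produces the homogeneous Volterra integral equation of the second kind $\tfrac12 h(y)+\int_b^y h(x)M(x,2y)\,dx=0$ on $[b,1]$, whose only $L^1$ solution is $h=0$ a.e.; this is the desired conclusion.

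The main obstacle, and the step I would treat most carefully, is establishing the triangular structure of $G$ on $[2b,2]$: one must verify that the off-diagonal and convolution terms contribute through $h$ only on $[b,\xi/2]$, so that together with the diagonal term $\tfrac12 h(\xi/2)$ one obtains a second-kind \emph{Volterra} equation rather than a Fredholm one. Without this triangularity the vanishing of $G$ would not by itself force $h\equiv0$; it is precisely the matching of the leading exponential $e^{2ikx}$ with the left endpoint $2x$ of the support of $P(x,\cdot)$ that produces it. The remaining ingredients—entirety of $F$, the identity theorem, and Fourier uniqueness—are routine.
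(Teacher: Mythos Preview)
The paper does not supply a proof of this lemma; it is quoted verbatim from Ramm's Property~C paper \cite{GR1}. Your argument is correct and is essentially the standard transformation-operator proof that underlies Ramm's result: write $\psi_+\tilde\psi_+$ via the triangular kernels $K,\tilde K$, recognise $F(k)$ as the Fourier transform of a compactly supported $L^1$ function $G$, use analyticity to pass from $F|_{\mathbb R_+}=0$ to $G\equiv0$, and then invert the resulting second-kind Volterra equation on $[b,1]$ to conclude $h=0$. The only point worth a remark is the boundedness of the product kernel $P$ (hence of $M$): for $q,\tilde q\in L^1$ with support in $[0,1]$ the Marchenko kernels satisfy the standard estimate $|K(x,t)|\le C\,\sigma\big(\tfrac{x+t}{2}\big)$ with $\sigma(u)=\int_u^1|q|$, which is bounded on the compact region in question, so the Volterra operator indeed has zero spectral radius on $L^1[b,1]$. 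With that caveat your sketch is complete and matches the approach of the cited source.
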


\section{Proofs}
This section deals with proofs of Theorems 1-4.
\begin{proof}[Proof of Theorem 1]
Suppose that there are two potential functions $q$ and $\tilde{q}$ corresponding to the same set $\{k_j\}_{j\ge1}$ and $q=\tilde{q}$ a.e. on $[0,1/2]$, we shall try to prove $ q=\tilde{q}$ a.e. on $[0,1]$.
Define
\begin{equation}\label{z1}
  g(k):=\!g_1(k)g_1(-k),\quad g_1(k):=\int_{\frac{1}{2}}^1\![\tilde{q}(x)\!-\!q(x)]\psi_+(x,k)\tilde{\psi}_+(x,k)dx.
\end{equation}
Then $g(k)$ is an entire function of $k$ of exponential type.
Since
\begin{equation*}
  e^{-{\rm i}k}\psi_+(x,k)=e^{{\rm i}k(x-1)}+\int_x^{2-x}K(x,t)e^{{\rm i}k(t-1)}dt,
\end{equation*}
then, for $x\in[1/2,1]$ there holds $|e^{-{\rm i}k}\psi_+(x,k)|\le ce^{|{\rm Im}k|(1-x)}$ for some constant $c>0$, which implies
\begin{equation}\label{r2}
|g(k)|\le ce^{2|{\rm Im}k|},\quad \forall k\in \mathbb{C}.
\end{equation}
Due to $q=\tilde{q}$ a.e. on $[0,1/2]$,  we have
\begin{equation}\label{i2}
\begin{split}
g_1(k)&=\int_{0}^1\![\tilde{q}(x)\!-\!q(x)]\psi_+(x,k)\tilde{\psi}_+(x,k)dx=(\tilde{\psi}_+'\psi_+-\psi_+'\tilde{\psi}_+)(x,k)|_0^1.\\
\end{split}
\end{equation}
Note that $\psi_+(x,k)=\tilde{\psi}_+(x,k)=e^{{\rm i}kx}$ for $x\ge1$, which implies that
\begin{equation}\label{klj}
(\tilde{\psi}_+'\psi_+-\psi_+'\tilde{\psi}_+)(1,k)=0,\quad \forall k\in \mathbb{C}.
\end{equation}
It follows from (\ref{jjq2}) and (\ref{i2}) that
\begin{equation}\label{o2}
\begin{split}
  \!\!\!g_1(k)&=\psi_+'(0,k)\tilde{\psi}_+(0,k)-\tilde{\psi}_+'(0,k)\psi_+(0,k)\\
  &=[{\psi}_+'(0,k)+{\rm i}k{\psi}_+(0,k)]\tilde{\psi}_+(0,k)\!-\![\tilde{\psi}_+'(0,k)+{\rm i}k\tilde{\psi}_+(0,k)]\psi_+(0,k)\\
  &=\omega(k)\tilde{\psi}_+(0,k)-\tilde{\omega}(k)\psi_+(0,k).
\end{split}
\end{equation}
 From \cite{EK} we see that
\begin{equation}\label{t1}
  \omega(k)=c_\omega e^{{\rm i}k}P(k),\quad P(k)=k^s\lim_{r\to+\infty}\prod_{0<|k_j|\le r}\left(1-\frac{k}{k_j}\right),
\end{equation}
where $c_\omega$ is some constant and $s=0 $ or $1$ (if $k=0$ is a zero of $\omega(k)$ then it must be simple).  Since $\omega(k)=2{\rm i}k+O(1)$ as $k\to+\infty$ in $\mathbb{R}$, we obtain that the constant $c_\omega$ can be uniquely determined by all zeros $\{k_j\}_{j\ge1}$. Namely,
\begin{equation}\label{t2}
 c_\omega=1/\lim_{k\to+\infty}\frac{P(k)e^{{\rm i}k}}{2ik}.
\end{equation}
  By virtue of (\ref{t1}) and (\ref{t2}) and the assumption on the given $\{k_j\}_{j\ge1}$, we have $\omega(k)=\tilde{\omega}(k)$.
 Therefore,
 \begin{equation*}
   g_1(k)=\omega(k)[\tilde{\psi}_+(0,k)-\psi_+(0,k)].
 \end{equation*}
It follows from (\ref{z1}) that
\begin{equation*}\label{r1}
g(k)=\omega(k)\omega(-k)[\tilde{\psi}_+(0,k)-\psi_+(0,k)][\tilde{\psi}_+(0,-k)-\psi_+(0,-k)],
\end{equation*}
which implies that
the function
\begin{equation}\label{y2}
  G(k):=\frac{g(k)}{\omega(k)\omega(-k)}
\end{equation}
is an entire function of $k$ of finite exponential type.

Now we shall prove $G(k)\equiv0$. If it is true, then $g(k):=g_1(k)g_1(-k)\equiv0$, which yields $g_1(k)\equiv0$. It follows from Lemma \ref{l5} and (\ref{i2}) that $q=\tilde{q}$ a.e. \!on $[0,1]$, and the proof is complete. If $G(k)$ is not the zero function, let us show the contradiction. (i) Assume that $G(k)\not\equiv0$ and has no zero, then   $\log |G(k)|=O(k)$ for $|k|\to\infty$ in $\mathbb{C}_+$.
By Lemma \ref{l2}, we get that
\begin{equation}\label{c1}
  \omega({\rm i}\tau)\omega(-{\rm i}\tau)=\frac{c_1}{\tau^{m+n+2}}e^{2\tau}[1\!+\!o(1)],\quad c_1\ne0,\quad \tau\to+\infty,
\end{equation}
Together with (\ref{r2}), (\ref{y2}) and (\ref{c1}), it yields $|G({\rm i}\tau)|\le C \tau^{m+n+2}$ for some constant $C>0$ as $\tau\to+\infty$. Thus, $\mathop {\varlimsup}\limits_{\tau\to+\infty}{\log|G({\rm i}\tau)|}/{\tau}:=A\le0$.  Observe that Eqs.(\ref{zx2}), (\ref{r2}) and (\ref{y2}) imply
\begin{equation}\label{jjq3}
G(k)=O\left(\frac{1}{k^2}\right),\quad |k|\to\infty ,\quad k\in\mathbb{R},
\end{equation}
which yields that $|G(k)|\le C $ for $k\in \mathbb{R}$ for some constant $C$. It follows from Lemma \ref{l3} that
\begin{equation}\label{y1}
 | G(k)|\le Ce^{A{\rm Im}k}\le C,\quad \forall k\in\overline{\mathbb{ C}}_+.
\end{equation}
Since the entire function $G(k)$ is an even function of $k$, it follows from (\ref{y1}) that $ | G(k)|\le C$ for all $k\in\mathbb{ C}$,
which implies that $G(k)$ identically equals to a constant for all $k\in\mathbb{C}$ by Liouville's theorem. Note that Eq.(\ref{jjq3}) also implies $G(k)\to0$ as $|k|\to\infty$ on $\mathbb{R}$. It yields $G(k)\equiv0$, which is a contradiction. (ii) Assume $G(k)\not\equiv0$ and has zeros $\{k_n^0\}$, then replace the above $G(k)$ by $G_0(k):=G(k)/G_1(k)$, where $G_1(k):=\prod_{n}(1-\frac{k}{k_n^0})$ which is also an entire function of $k$ of finite exponential type. By similar arguments, we can also prove $G_0(k)\equiv0$ and so $G(k)\equiv0$, which yields the contradiction.
\end{proof}
\begin{proof}[Proof of Theorem 2]
By virtue of $q=\tilde{q}$ a.e. \!on $[0,a]$ with $a>1/2$, the integral interval in Eq.(\ref{z1}) becomes $[a,1]$. Thus,
\begin{equation}\label{z2}
|g(k)|\le c e^{4(1-a)|{\rm Im}k|},\quad \forall k\in\mathbb{C}.
\end{equation}
Since $|{\rm Im}k|=r|\sin\theta|$, where $k=re^{{\rm i}\theta}$, it follows from (\ref{z2}) that
\begin{equation*}
h_g(\theta):=\mathop {\varlimsup }\limits_{r \to \infty }\frac{\ln |g(re^{{\rm i}\theta})|}{r}\leq4(1-a)|\sin\theta|,
\end{equation*}
which implies
\begin{equation}\label{z3}
 \frac{1}{2\pi}\int_0^{2\pi}h_g(\theta)d\theta\leq\frac{2(1-a)}{\pi}\int_0^{2\pi}|\sin\theta|d\theta=\frac{8(1-a)}{\pi}.
\end{equation}
From (\ref{z1}) and (\ref{o2}) we get $g(k)=0$ at $\pm k_j$, $k_j\in\Omega$, thus,
\begin{equation*}
  N_g(r)\ge 2N_\Omega(r)=\frac{4\gamma r}{\pi}[1+o(1)],\;r\to\infty.
\end{equation*}
It follows from Lemma \ref{l4} and (\ref{z3}) that if the entire function $g(k)\not\equiv0$ then
\begin{equation*}
  \frac{4\gamma}{\pi}\le \mathop {\varliminf }\limits_{r \to \infty }  \frac{N_g(r)}{r}\leq\frac{1}{2\pi}\int_0^{2\pi}h_g(\theta)d\theta\le\frac{8(1-a)}{\pi},
\end{equation*}
which yields $\gamma\le2(1-a)$.
However, now $\gamma>2(1-a)$, it yields $g(k)\equiv0$. Thus, from Lemma \ref{l5}, we conclude that $q=\tilde{q}$ a.e. on $[0,1]$.
\end{proof}
\begin{proof}[Proof of Theorem 3]

Taking $x=0$ in the second equation in (\ref{1}) and  noting that $\psi_-(0,-k)=1$ and $\psi_-'(0,-k)=ik$, we obtain
 \begin{equation*}
   s(k)=-{\psi}_+'(0,k)+{\rm i}k{\psi}_+(0,k).
 \end{equation*}
 It follows from (\ref{o2}) that
 \begin{equation}\label{p1}
\begin{split}
 \! \!\!\!g_1(k)&=\psi_+'(0,k)\tilde{\psi}_+(0,k)-\tilde{\psi}_+'(0,k)\psi_+(0,k)\\
  &=\![{\psi}_+'(0,k)-{\rm i}k{\psi}_+(0,k)]\tilde{\psi}_+(0,k)\!-\![\tilde{\psi}_+'(0,k)-{\rm i}k\tilde{\psi}_+(0,k)]\psi_+(0,k)\\
  &=-s(k)\tilde{\psi}_+(0,k)+\tilde{s}(k)\psi_+(0,k).
\end{split}
\end{equation}

Note the fact that the function $\omega(k)$ and the set of signs $\Sigma$ uniquely determine the set $S$. It is shown in \cite{EK}, for the convenience of readers, we give a simple description. Actually, since $s(k)s(-k)=4k^2-\omega(k)\omega(-k)$ and $\omega(k)$ is known, it yields that all zeros of the function $s(k)s(-k)$ are known. Note that $\overline{{s(k)}}=s(-\overline{{k}})$, and so all zeros of the function $s(k)s(-k)$ are symmetric with respect to the real and imaginary axes. Using the set of signs $\Sigma$, one can distinguish which one is the zero of $s(k)$ and which one is the zero of $s(-k)$.

Let $N_S(r)$ be the number of $\zeta_j\in S$ in the disk $|k|\le r$, then $N_\Sigma(r)=N_S(r)$. Note that $\{k_j\}_{j\ge1}\cap S=\emptyset$. From (\ref{z1}), (\ref{o2}) and (\ref{p1}) we see that $g(k)=0$ at $k=\pm k_j$, $j\ge1$ and $k=\pm\zeta_j$, $\zeta_j\in S$, which yields
\begin{equation*}
  N_g(r)\ge 2N_\omega(r)+2N_{S}(r)=\frac{4(1+\beta )r}{\pi}[1+o(1)],\;r\to\infty.
\end{equation*}
From similar arguments to the proof of Theorem 2, we arrive at that $q=\tilde{q}$ a.e. on $[0,1]$.
\end{proof}

\begin{proof}[Proof of Theorem 4]
From (\ref{z1}) and (\ref{klj}) we get
\begin{equation}\label{zx1}
g_1(k)=\{\psi_+(x,k),\tilde{\psi}_+(x,k)\}|_\frac{1}{2}^1=\psi_+(\textstyle{\frac{1}{2}},k)\tilde{\psi}_+(\frac{1}{2},k)(\tau_k-\tilde{\tau}_k).
\end{equation}
Since $k_j=\tilde{k}_j$ and $\tau_{k_{j}}=\tilde{\tau}_{k_j}$ ($j\ge1$), we get that $k_j$ ($j\ge1$) are zeros of $g_1(k)$. Since all zeros of $\omega(k)$ are simple  (if $k_{j}$ is some zero of $\omega(k)$ with multiplicity $m_j$ then the condition in Remark 3 is added), the function
\begin{equation*}
  G(k):=\frac{{g}(k)}{\omega(k)\omega(-k)}
\end{equation*}
is an entire function of $k$ of finite exponential type.  Following the same argument as the proof of Theorem 1, we obtain $g_1(k)\equiv0$, which implies from Lemma 4 that $q(x)=\tilde{q}(x)$ a.e. on $[1/2,1]$.

To prove $q=\tilde{q}$ almost everywhere on $[0,1/2]$,
 we consider the supplementary Schr\"{o}dinger operator $L({q}_1)y:=-y''+{q}_1(x)y$ with ${q}_1(x)=q(1-x)$.  Let ${\psi}_{1\pm}(x,k)$ be solutions of the equation $-y''+{q}_1(x)y=k^2y$ satisfying ${\psi}_{1+}(x,k)=e^{{\rm i}kx}$ for $x\ge1$ and ${\psi}_{1-}(x,k)=e^{-{\rm i}kx}$ for $x\le0$.

Let us first show that all eigenvalues and resonances corresponding to $L(q)$ are the same as that corresponding to $L(q_1)$. Taking $x=0$ in (\ref{1}) for $\omega_1(k)$ corresponding to $q_1$, and  noting that $\psi_{1-}(0,k)=1$ and $\psi_{1-}'(0,k)=-{\rm i}k$, we obtain
 \begin{equation}\label{xm1}
   \omega_1(k)={\psi}_{1+}'(0,k)+ik{\psi}_{1+}(0,k).
 \end{equation}
 By a direct calculation we get
\begin{equation}\label{xm0}
 \psi_{1+}(x,k)=e^{{\rm i}k}\psi_-(1-x,k).
\end{equation}
  Taking $x=1$  in (\ref{1}) for $\omega(k)$ and noting that $\psi_{+}(1,k)=e^{{\rm i}k}$ and $\psi_{+}'(1,k)={\rm i}ke^{{\rm i}k}$, we have
 \begin{equation}\label{xm2}
  \omega(k)=e^{{\rm i}k}[{\rm i}k\psi_-(1,k)-\psi_-'(1,k)].
 \end{equation}
 Together with Eqs.(\ref{xm1})-(\ref{xm2}), we get $\omega(k)=\omega_1(k)$.

Now, note that $q_1(x)=\tilde{q}_1(x)$ a.e. \!on $[0,1/2]$ and $\omega_1(k)=\tilde{\omega}_1(k)$. It follows from Theorem 1 that $q_1(x)=\tilde{q}_1(x)$ a.e. \!on $[0,1]$. That is, $q(x)=\tilde{q}(x)$ a.e. \!on $[0,1]$. The proof is complete.
\end{proof}

%%%%%%%%%%%%%%%%%%%%%%%%%%%%%%%%%%%%%
\noindent {\bf Acknowledgments.}
The authors acknowledge helpful comments from the referees.
The research work was supported in part by the National Natural Science Foundation of China (11171152, 91538108 and 11611530682) and
Natural Science Foundation of Jiangsu Province of China (BK 20141392). The author Xu was supported by the China Scholarship Fund (201706840062).
%%%%%%%%%%%%%%%%%%%%%%%%%%%%%%%%%%%%%

\end{document}